\newcommand{\vectornorm}[1]{{\left|\left|#1\right|\right|}_{2}} 
\newcommand{\vectornormOne}[1]{{\left|\left|#1\right|\right|}_{1}} 
\newcommand{\vectornormZero}[1]{{\left|\left|#1\right|\right|}_{0}} 
\newtheorem{theorem}{Theorem}[section]
\newenvironment{proof}[1][Proof]{\begin{trivlist}
\item[\hskip \labelsep {\bfseries #1}]}{$\blacksquare$ \end{trivlist}}
\title{Quantized Network Coding for Sparse Messages}
\author{Mahdy Nabaee and Fabrice Labeau
\thanks{This work was supported by Hydro-Québec, the Natural Sciences and Engineering Research Council of Canada and McGill University in the framework of the NSERC/Hydro-Québec/McGill Industrial Research Chair in Interactive Information Infrastructure for the Power Grid.
}}
\begin{document}

\date{}

\maketitle

\begin{abstract}
In this paper, we study the data gathering problem in the context of power grids by using a network of sensors, where the sensed data have inter-node redundancy.
Specifically, we propose a new transmission method, called \textit{quantized network coding}, which performs linear network coding in the infinite field of real numbers, and quantization to accommodate the finite capacity of edges.
By using the concepts in compressed sensing literature, we propose to use $\ell_1$-minimization to decode the quantized network coded packets, especially when the number of received packets at the decoder is less than the size of sensed data (\textit{i.e.} number of nodes).
We also propose an appropriate design for network coding coefficients, based on restricted isometry property, which results in robust $\ell_1$-min decoding.
Our numerical analysis show that the proposed quantized network coding scheme with $\ell_1$-min decoding can achieve significant improvements, in terms of compression ratio and delivery delay, compared to conventional packet forwarding.
\end{abstract}

\section{Introduction}
Based on the data, reported by North American electric reliability council on 162 disturbances in the power system, problems in real-time monitoring and control system, communication system, and delayed restoration were the major cause of these disturbances \cite{1033736}.
This fact indicates the importance of information infrastructures for reliable and economic operation of power systems.
In the past decade, sensor networks have been widely proposed (and in some cases used) as a promising technology to enhance the future of electric power grid, in different aspects, including power generation, distribution and utilization \cite{Opps}. 

As the primary element of information chain, collection of sensed data by sensor networks is a critical task in monitoring and control of power grids.
Especially, we are interested in data collection in a power substation, where most of the sensed data are naturally (inter-node) correlated.
Motivated by this application, we study {data gathering} scenario (in which messages of different nodes are transmitted to a single node) for correlated sensed data, especially when they are sparse in some transform domain.
Specifically, we compare conventional packet forwarding with network coding and present our proposed network coding based data gathering, followed by a discussion on its theoretical and numerical guarantees.

It is proved that by using packet forwarding via optimal routing, one can achieve the cut set upper bound \cite{netInfFlow} on the information rates of independent sources, in data gathering scenario \cite{NCCorr_NIFwithCOrrSo}.
For the case of correlated sources, optimal distributed source coding and packet forwarding can achieve maximum throughput of the network in data gathering scenario and there is no throughput advantage for using network coding \cite{NCCorr_NIFwithCOrrSo}.
But, this requires the knowledge of source dependencies and network deployment to be available at all nodes, which signifies the importance of using network coding.
Moreover, flexibility and robustness to deployment changes have drawn attention to network coding, as a good alternative for packet forwarding.

Random linear network coding for correlated messages has been studied in \cite{Ho04networkcoding}, for the case of two messages ($n=2$) and an upper bound on error probability of a so called $\alpha$-decoder is derived.
Recently, the idea of using the concepts of \textit{compressed sensing} in data gathering scenario has been proposed in a number of papers \cite{rabbat}\cite{sFeiz}\cite{feizi2011power}, where different applications are considered.
Feizi \textit{et. al.} have proposed the idea of joint source and network coding in which a random mapping is aligned with analogue network coding \cite{katti2007embracing} to decrease temporal and spatial redundancy of sensor data \cite{sFeiz}\cite{feizi2011power}.

Unfortunately, there is not any published result, which discusses the theoretical requirements of local network coding to ensure robust recovery of messages.
We address this by formulating data gathering scenario with network coding and discussing theoretical requirements for robust compressed sensing ($\ell_1$-min) decoding.

Our proposed Quantized Network Coding (QNC) with $\ell_1$-min decoding is formulated in section~\ref{sec:QNC}.
The discussion on the design of local network coding coefficients and theoretical feasibility of $\ell_1$-min decoding, using restricted isometry property is presented in section~\ref{sec:DesignNCodes}, which is followed by
deriving an upper bound on the recovery error of $\ell_1$-min decoder in section~\ref{sec:Decoder}.
Finally, in section~\ref{sec:SimResults}, we present our simulation results and discuss our conclusions in and section~\ref{sec:Conclusions}.

\section{Quantized Network Coding}\label{sec:QNC}
Consider a sensor network, represented by a directed graph, $\mathcal{G}=(\mathcal{V},\mathcal{E})$, where $\mathcal{V}=\{1,\ldots,n\}$ and $\mathcal{E}=\{1,\ldots,|\mathcal{E}|\}$ are the sets of nodes and edges (links), respectively.
Each edge, $e$, can maintain a lossless communication from its tail node, $tail(e)$, to its head node, $head(e)$, without any interface from other edges, at a maximum rate of $C_e$ bits per channel use.
This implies that the input and output contents of edge $e$, at time $t$, represented by $y_e(t)$, are the same. Furthermore, $y_e(t)$ is from a discrete finite alphabet of size $2^{L C_e}$, where $L$ is the block length, used for transmitting over edges.
We define the set of incoming edges to node $v$, as $\textit{In}(v)=\{e: head(e)=v\}$, and
outgoing edges from node $v$, as $\textit{Out}(v)=\{e: tail(e)=v\}$.

We assume that each node includes a random information source, $X_v$, which generates, (random) message $x_v$.
Furthermore, the messages, $\underline{x}=[x_v:v \in \mathcal{V}] \in \mathcal{R}^n$, are such that there is a linear transform matrix, $\phi_{n \times n}$, for which $\underline{x}=\phi \cdot \underline{s}$, and $\underline{s}$ is $k$-sparse (has $k$ non-zero elements).
Referred as \textit{data gathering} scenario, all of the messages, $x_v$'s, are transmitted to a single node, $v_0 \in \mathcal{V}$, called \emph{gateway} (or \textit{decoder}) node.

In networks with finite link capacities, conventional linear network coding should be performed in finite field where the operations are closed in the field \cite{NC_RLNCtoMulticast}.
However, since theoretical results of compressed sensing are developed for the field of real numbers, we propose to perform linear network coding in real field and then quantize the results to couple with the finite capacity of outgoing edges.
Specifically, for all $v \in \mathcal{V}$ and $e \in \textit{Out}(v)$, we define QNC, according to:
\begin{equation}\label{Eq:QNC1}
y_e(t)= \textbf{Q}_e \Big [\sum_{e' \in \textit{In}(v)} \beta_{e,e'}(t)~y_e(t-1)+\alpha_{e,v}(t)~x_v \Big ], 
\end{equation}
where $\textbf{Q}_e[ \centerdot]$ is the quantizer, associated with the outgoing edge $e$, and $\beta_{e,e'}(t)$ and $\alpha_{e,v}(t)$ are the corresponding network coding coefficients, picked from real numbers.
Time index, $t$, is integer and represents the time during which blocks of length $L$, representing quantized network coded packets are transmitted over all edges.
Messages, $x_v$'s, are supposed not to be changing with $t$ and are constant until they are decoded.
Initial rest condition is also assumed to be satisfied in our QNC scenario: $y_e(1)=0,~\forall~e \in \mathcal{E}.$
Representing the quantization error by $n_e(t)$, QNC can be reformulated according to:
\begin{equation}\label{Eq:QNC2}
y_e(t)= \sum_{e' \in \textit{In}(v)} \beta_{e,e'}(t)~y_e(t-1)+\alpha_{e,v}(t)~x_v + n_e(t).
\end{equation}
Moreover, we assume bounded source values, so that: $\forall v \in \mathcal{V}$, $| x_v | < +q_{max}$. 
To have $| y_e(t) | < +q_{max}$, for all $e \in \mathcal{E}$, we pick network coding coefficients such that at each node $v$:
\begin{eqnarray}\label{Eq:antiClipCond1}
\sum_{e' \in \textit{In}(v)} |\beta_{e,e'}(t)|+|\alpha_{e,v}(t)| \leq 1,\forall~v \in \mathcal{V},~\forall~e \in \textit{Out}(v).
\end{eqnarray}

By defining vectors of edge contents, $\underline{y}(t)=[y_e(t):e \in \mathcal{E}]$, and quantization noises, $\underline{n}(t)=[n_e(t):e \in \mathcal{E}]$, we have:
\begin{equation}\label{Eq:matrixForm}
\underline{y}(t)=F(t) \cdot \underline{y}(t-1)+A(t) \cdot \underline{x}+\underline{n}(t),
\end{equation}
where $F(t)$, and $A(t)$ are defined according to:
\begin{equation}
 F(t)_{|\mathcal{E}|\times |\mathcal{E}|}: \{F(t)\}_{e,e'}=\left\{
\begin{array}{l l}
  \beta_{e,e'}(t)  & ,~\scriptsize tail(e)=head(e') \\
  0  &  ,~\mbox{otherwise} \\ \end{array} \right.
\end{equation}
\begin{equation}\label{Eq:defineAt}
 A(t)_{|\mathcal{E}|\times |\mathcal{V}|}: \{A(t)\}_{e,v}=\left\{
\begin{array}{l l}
  \alpha_{e,v}(t)  & ,~tail(e)=v \\
  0  &  ,~\mbox{otherwise} \\ \end{array} \right. .
\end{equation}
By considering (\ref{Eq:matrixForm}) as the difference equation for a linear system with $\underline{n}(t)$'s and $\underline{x}$ as its input, and the content of gateway incoming edges as its output, and using the results in \cite{kailath1980linear}, the \textit{marginal measurements}, $\{z(t)\}_i$, at time $t$, are given by:
\begin{equation}\label{Eq:measForm1}
\underline{z}(t)=[y_e(t):e \in \textit{In}(v_0)]=B \cdot \underline{y}(t)= \Psi(t) \cdot \underline{x}+\underline{n}_{eff}(t),
\end{equation}
where
\begin{eqnarray}
\Psi(t)&=&B \cdot \sum_{t'=2}^{t} \prod_{t''=t}^{t'+1} F(t'')A(t'), \label{Eq:DefPsi} \\
\underline{n}_{eff}(t)&=&B \cdot \sum_{t'=2}^{t} \prod_{t''=t}^{t'+1} F(t'') \cdot \underline{n}(t'), \label{Eq:DefNeff}
\end{eqnarray}
and $B$ is defined such that:
\begin{equation}
\{B\}_{i,e}=\left\{
\begin{array}{l l}
  1  & ,~i~\mbox{corresponds to}~e,~e \in \textit{In}(v_0) \\
  0  &  ,~\mbox{otherwise} \\ \end{array} \right. .
\end{equation}
By storing enough marginal measurements, at the decoder, we build up the \textit{total measurements vector}, $\underline{z}_{tot}(t)$, as follows:
\begin{equation}\label{Eq:totMeasEq}
\underline{z}_{tot}(t)=\left[ {\begin{array}{*{20}c}
	\underline{z}(2) \\	
   \vdots   \\
   \underline{z}(t)   \\
 \end{array} } \right]_{m \times 1}=\Psi_{tot}(t) \cdot \underline{x}+\underline{n}_{eff,tot}(t),
\end{equation}
where the \textit{total measurement matrix}, $\Psi_{tot}(t)$, and \textit{total effective noise} vector, $\underline{n}_{eff,tot}(t)$, are calculated as follows:
\begin{equation}
\Psi_{tot}(t)=\left[ {\begin{array}{*{20}c}
	\Psi(2) \\	
   \vdots   \\
   \Psi(t)   \\
 \end{array} } \right],~~
\underline{n}_{eff,tot}(t)=\left[ {\begin{array}{*{20}c}
	\underline{n}_{eff}(2) \\	
   \vdots   \\
	\underline{n}_{eff}(t)   \\
 \end{array} } \right].\footnote{Since we assume transmission starts from $t=1$, at which initial rest condition holds, $\{z(1)\}_i$'s are all zero and not useful for decoding.}
\end{equation}

It is now desired to recover original messages, $\underline{x}$, from the noisy measurements, $\underline{z}_{tot}(t)$, assuming that enough measurements are stored at the decoder.
Specifically, we are interested in investigating the feasibility of compressed sensing decoding by using $\ell_1$-minimization, when the number of measurements ($\{z_{tot}(t)\}_i, 1 \leq i \leq m$, and $m=(t-1)|\textit{In}(v_0)|$) is less than number of messages, $n$; that is: $m < n$.

\section{Design of Network Coding Coefficients}\label{sec:DesignNCodes}
For compressed sensing with $\ell_1$-min decoding, appropriate measurement matrices should be used \cite{LinProg}.
Specifically, Matrices with good norm conservation properties are shown to be good choices, in this case \cite{candes}.
Restricted Isometry Property (RIP) is defined to characterize this norm conservation. 
Explicitly, $\Theta_{tot}(t)=\Psi_{tot}(t) \phi$ is said to satisfy RIP of order $k$ with constant $\delta_k$ if we have:
\begin{equation}\label{Eq:RIPexact}
1-\delta_k \leq \frac{ \vectornorm{\Theta_{tot}(t)~\underline{s} }^2}{\vectornorm{\underline{s}}^2} \leq 1+ \delta_k,~\forall~\underline{s} \in \mathcal{R}^n,~\vectornormZero{\underline{s}} \leq k.
\end{equation}
It is also shown that random matrices with independently and identically distributed zero mean Gaussian entries of appropriate dimension satisfy RIP with overwhelming probability \cite{simpleProof}.
Such measurement matrices are also global, in the sense that the choice of $\phi$ does not affect the satisfaction of RIP.
In the following, we present a theorem, which summarizes the result of our work on designing local network coding coefficients such that the resulting $\Psi_{tot}(t)$ (and also $\Theta_{tot}(t)=\Psi_{tot}(t) \phi$) is appropriate for compressive sensing.

\begin{theorem}\label{th:designRIP}
If the network coding coefficients, $\alpha_{e,v}(t)$ and $\beta_{e,e'}(t)$, are such that:
\begin{itemize}
\item $\alpha_{e,v}(t)=0,~\forall~t>2,$ and $\alpha_{e,v}(2)$'s are independent zero mean Gaussian random variables,
\item $\beta_{e,e'}(t)$'s are deterministic,
\end{itemize}
then the resulting total measurement matrix, $\Psi_{tot}(t)$, has zero-mean Gaussian entries, and for every $v,v' \in \mathcal{V}$, where $v \neq v'$, $\{\Psi_{tot}(t)\}_{iv}$ and $\{\Psi_{tot}(t)\}_{iv'}$ are independent.
\footnote{The structurally zero entries of $F(t)$'s do not let us have the same variances for entries of $\Psi_{tot}(t)$.} 
\end{theorem}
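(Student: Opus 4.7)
The plan is to exploit the hypothesis $\alpha_{e,v}(t)=0$ for $t>2$ to collapse the sum defining $\Psi(t)$ in (\ref{Eq:DefPsi}) to a single term involving $A(2)$. Since only the summand with $t'=2$ survives, I would write
\[
\Psi(t) = B \cdot \Big( \prod_{t''=t}^{3} F(t'') \Big) \cdot A(2) =: M(t) \cdot A(2),
\]
where $M(t)$ is deterministic because every $F(t'')$ is built from the deterministic $\beta$-coefficients (an empty product is interpreted as the identity, so this formula is also valid for $t=2$). Vertically stacking these identities for $t'=2,\ldots,t$ yields $\Psi_{tot}(t) = M_{tot}(t) \cdot A(2)$ with $M_{tot}(t)$ still deterministic, so all of the randomness of $\Psi_{tot}(t)$ is carried by the single matrix $A(2)$.

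Next I would read off the column structure of $A(2)$ from (\ref{Eq:defineAt}): the $v$-th column of $A(2)$ has nonzero entries only in the rows indexed by $e \in \textit{Out}(v)$, and those entries are precisely the independent zero-mean Gaussians $\alpha_{e,v}(2)$. Consequently every entry of $\Psi_{tot}(t)$ is a deterministic linear combination of a family of independent zero-mean Gaussians, and is therefore itself zero-mean Gaussian.

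For the independence claim, I would invoke the elementary fact that each edge has a unique tail, so $\textit{Out}(v) \cap \textit{Out}(v') = \emptyset$ whenever $v \neq v'$. The $v$-th and $v'$-th columns of $A(2)$ are then functions of disjoint subsets of the independent $\alpha$-family and are therefore independent random vectors; applying the same deterministic matrix $M_{tot}(t)$ to each preserves this independence, and the corresponding columns of $\Psi_{tot}(t)$, and in particular any pair of entries $\{\Psi_{tot}(t)\}_{i,v}$ and $\{\Psi_{tot}(t)\}_{i,v'}$, inherit it.

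I do not foresee a real obstacle: the argument reduces to tracking sparsity patterns and using the standard facts that linear combinations of jointly Gaussian variables are Gaussian and that deterministic linear functions of independent random vectors stay independent. The only subtlety worth flagging in the write-up, and the reason the theorem does not claim i.i.d.\ entries, is that different rows of $M_{tot}(t)$ combine the underlying $\alpha$'s through different sparsity patterns inherited from the $F(t'')$'s, as the footnote points out; so one can only conclude zero mean and (marginal) Gaussianity per entry, not equal variances across entries.
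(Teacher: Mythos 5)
Your proof is correct. The paper itself omits the proof, saying only that it ``is based on manipulation of Gaussian random variable[s] and their linear combinations,'' and your argument is exactly that: since $A(t')=0$ for $t'>2$, the sum in (\ref{Eq:DefPsi}) collapses to $\Psi(t)=B\big(\prod_{t''=t}^{3}F(t'')\big)A(2)$ with a deterministic left factor, so each entry of $\Psi_{tot}(t)$ is a deterministic linear combination of the independent zero-mean Gaussians $\alpha_{e,v}(2)$ (hence zero-mean Gaussian), and the disjointness of $\textit{In}$-free supports $\textit{Out}(v)\cap\textit{Out}(v')=\emptyset$ gives independence across columns; your closing remark about unequal variances correctly accounts for the paper's footnote.
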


The proof of this theorem is omitted for lack of space; it is based on manipulation of Gaussian random variable and their linear combinations.

The beauty of this proposed design (based on conditions of theorem~\ref{th:designRIP}) is that there is no need to have an extra overhead communication between the nodes, in order to generate appropriate network coding coefficients.

\section{Recovery Error Bound for $\ell_1$-min Decoder}\label{sec:Decoder}
In the following, we derive an upper bound for the $\ell_2$-norm of recovery error of QNC with $\ell_1$-min decoder (which itself can be implemented by linear programming \cite{LinProg}). 

\begin{theorem}\label{theorem:QNCrecoveryL2norm}
Consider a data gathering scenario with $k$-sparse messages, $\underline{x}$, and sparsity transform $\phi$, where $|x_v| < q_{max},~\forall~v$.
Assume at all nodes, QNC with uniform quantizer of step size $\Delta_{Q,e}$ (possibly different for each $e$) and network coding coefficients $\alpha_{e,v}(t)$ and $\beta_{e,e'}(t)$ is performed. 
Network coding coefficients are also assumed to satisfy the condition of Eq.~\ref{Eq:antiClipCond1}.
In such network, if $\Psi_{tot}(t)$ and $\phi$ are such that $\Psi_{tot}(t) \cdot \phi$ satisfies RIP of order $2k$ with a constant $\delta_{2k} < \sqrt{2}-1$, and $\underline{\hat{x}}(t)$ is the recovery result of $\ell_1$-min decoder of Eq.~\ref{Eq:L1minDecoder} with $\epsilon^2(t)$ calculated according to Eq.~\ref{Eq:epsVal}, then the $\ell_2$-norm of recovery error is upper bounded as in (\ref{Eq:boundError}).
\begin{eqnarray}
\underline{\hat{x}}(t)&=&\phi \cdot \arg\min_{\underline{s}'} \vectornormOne{\underline{s}'}, \label{Eq:L1minDecoder} \\
 && \text{subject to:}~\vectornorm{\underline{z}_{tot}(t)-\Psi_{tot}(t)~ \phi ~ \underline{s}'}^2 \leq \epsilon^2(t) \nonumber
\end{eqnarray}
\begin{eqnarray}
\vectornorm{\underline{x}-\underline{\hat{x}}}^2 & \leq & c_1~\epsilon^2(t) \label{Eq:boundError} \\
c_1 &= & 4\frac{\sqrt{1+\delta_{2k}}}{1-(1+\sqrt{2})\delta_{2k}}  \\
\epsilon^2(t)&=& \frac{1}{4}  \sum_{t'=2}^{t} \Big ( \sum_{t''=1}^{t'-1} \underline{\Delta}_Q^{T}~ {\Big | \prod_{t'''=t''+2}^{t}{F(t''')} \Big |}^{T} \label{Eq:epsVal} \\
&& \cdot B^T B \cdot \sum_{t''=1}^{t'-1} \Big | \prod_{t'''=t''+2}^{t'}{F(t''')} \Big | ~ \underline{\Delta}_Q \Big )\nonumber \\
\underline{\Delta}_Q &=& [\Delta_{Q,e}:e \in \mathcal{E}] 
\end{eqnarray}
\end{theorem}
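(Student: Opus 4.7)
The plan is a two-stage argument. First, verify that the ground-truth sparse vector $\underline{s}$ (with $\underline{x}=\phi \underline{s}$) is feasible for the $\ell_1$-min program (\ref{Eq:L1minDecoder}), by establishing $\vectornorm{\underline{n}_{eff,tot}(t)}^2 \leq \epsilon^2(t)$. Second, invoke a standard compressed-sensing recovery theorem under RIP to convert feasibility plus $k$-sparsity into the $\ell_2$ error bound (\ref{Eq:boundError}).

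For the first stage, I would start from the uniform-quantizer bound $|n_e(t)| \leq \Delta_{Q,e}/2$. For any real matrix $M$ and vectors $\underline{v},\underline{w}$ with $|v_i|\leq w_i$ componentwise and $w_i \geq 0$, one has $|M\underline{v}|_i \leq (|M|\underline{w})_i$ entrywise, where $|M|$ denotes componentwise absolute value; consequently $\vectornorm{M\underline{v}}\leq \vectornorm{|M|\underline{w}}$. Applying this componentwise to each time slice in the sum defining $\underline{n}_{eff}(t')$ in Eq.~\ref{Eq:DefNeff}, and using $|B|=B$ since $B$ has $0/1$ entries, gives
\begin{equation*}
\vectornorm{\underline{n}_{eff}(t')} \leq \frac{1}{2}\,\vectornorm{ B \sum_{t''} \bigl| \prod_{t'''} F(t''') \bigr|\,\underline{\Delta}_Q }.
\end{equation*}
Squaring, expanding $\vectornorm{\underline{u}}^2=\underline{u}^{T}\underline{u}$, and summing $t'$ from $2$ to $t$ reproduces the double-sum quadratic form of Eq.~\ref{Eq:epsVal}, yielding $\vectornorm{\underline{n}_{eff,tot}(t)}^2\leq \epsilon^2(t)$ and hence feasibility of the true $\underline{s}$ in (\ref{Eq:L1minDecoder}).

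For the second stage, I would apply the RIP-based $\ell_1$ recovery result of Cand\`es (2008): when $\Theta_{tot}(t)=\Psi_{tot}(t)\phi$ satisfies RIP of order $2k$ with $\delta_{2k}<\sqrt{2}-1$, and the true $\underline{s}$ is $k$-sparse and feasible with noise budget $\epsilon(t)$, the minimizer $\hat{\underline{s}}$ of (\ref{Eq:L1minDecoder}) obeys $\vectornorm{\hat{\underline{s}}-\underline{s}}\leq \frac{4\sqrt{1+\delta_{2k}}}{1-(1+\sqrt{2})\delta_{2k}}\,\epsilon(t)$. Since $\hat{\underline{x}}-\underline{x}=\phi(\hat{\underline{s}}-\underline{s})$ and $\phi$ is taken as an orthonormal sparsity basis, the $\ell_2$ isometry of $\phi$ carries the bound over to $\vectornorm{\hat{\underline{x}}-\underline{x}}$, delivering the claimed inequality~(\ref{Eq:boundError}).

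The main obstacle is the bookkeeping in the first stage: the products $\prod_{t'''=t''+2}^{t'} F(t''')$ have indices that telescope delicately with the outer sum over $t''$, and recovering the precise structure with $B^{T}B$ sandwiched between two $\sum_{t''}$ factors requires careful tracking of transposes when the norm is squared and of which terms in the time-varying dynamics survive the operator $B$. Once that quadratic form is in hand, the rest is essentially an off-the-shelf application of RIP-based $\ell_1$ recovery; the real content of the theorem is the translation of QNC quantization-noise propagation through the time-varying linear system into an explicit, computable noise budget $\epsilon^2(t)$ expressed solely in terms of the $F(t)$, $B$, and $\underline{\Delta}_Q$.
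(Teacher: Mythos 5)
Your proposal is correct and follows essentially the same route as the paper: bound the componentwise quantization noise by $\underline{\Delta}_Q/2$, propagate it through $|F(\cdot)|$ products and $B$ to get $\vectornorm{\underline{n}_{eff,tot}(t)}^2\leq\epsilon^2(t)$, then invoke Cand\`es's Theorem~1.2 under $\delta_{2k}<\sqrt{2}-1$. Your explicit remark that orthonormality of $\phi$ is needed to carry the bound from $\underline{\hat{s}}-\underline{s}$ to $\underline{\hat{x}}-\underline{x}$ is a point the paper leaves implicit, and is a worthwhile clarification.
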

\begin{proof}
Since the network is lossless and network coding coefficients satisfy the condition of Eq.~\ref{Eq:antiClipCond1}, and $|x_v| \leq {q}_{max},~\forall~v$, the only associated noise is quantization noise at each edge.
As a result of having uniform quantizers, we have:
$| n_{e}(t) | \leq \frac{\Delta_{Q,e}}{2},~\forall~e \in \mathcal{E}.$
Equivalently, the absolute value vector of $\underline{n}(t)$, represented by $|\underline{n}(t)|$, is such that:
$| \underline{n}(t)| \leq \frac{1}{2} \underline{\Delta}_Q.$
Therefore, $|\underline{n}_{eff}(t)|$ can be upper bounded as follows:
\begin{eqnarray}
\Big | \underline{n}_{eff}(t) \Big| & \leq & B \cdot  \sum_{t'=1}^{t-1} \Big | \prod_{t''=t'+2}^{t}{F(t'')} \Big | \cdot \frac{1}{2} \underline{\Delta}_Q.  \label{Eq:absUpB5} 
\end{eqnarray}
This implies:
\begin{eqnarray}
\vectornorm{\underline{n}_{eff,tot}(t)}^2 &=&  \sum_{t'=2}^{t} \vectornorm{\underline{n}_{eff}(t')}^2 \label{Eq:uppBoundNoise} \\
&=& \sum_{t'=2}^{t} {\Big | \underline{n}_{eff}(t') \Big |}^{T}  \Big | \underline{n}_{eff}(t') \Big |
 \leq  \epsilon^{2}(t), \nonumber
\end{eqnarray}
where $\epsilon^2(t)$ is as in Eq.~\ref{Eq:epsVal}.
Now, by applying theorem~1.2 in \cite{candes} and using (\ref{Eq:uppBoundNoise}), theorem is proved.
\end{proof}

Since for uniform quantizers $\Delta_{Q,e}$ is equal to ${2{q}_{max}}/({2^{L C_e}-1})$, the upper bound in the theorem, $c_1 \epsilon$ is decreases when the block length, $L$, is increased.
However, this will result in an undesirable increase on the delivery delay in the network, which requires us to find the optimal $L$ for each quality of service (\textit{i.e.} SNR).

\section{Simulation Results}\label{sec:SimResults}
We evaluate the performance of QNC in terms of average Signal to Noise Ratio (SNR), $\overline{SNR}(t)=10\log_{10} ({\overline{\vectornorm{\underline{x}}}}/{\overline{\vectornorm{\underline{x}-\underline{\hat{x}}(t)}}})$, and average delivery delay, $\overline{\tau}_d(t)=L \overline{(t-1)}$, where the averaging is done over different realizations of simulation runs, \textit{i.e.} network deployments.

To set up the simulations, we randomly generate networks of nodes, in which directed edges with unit capacity, \textit{i.e.} $C_e=1~[\frac{bit}{channel~ use}],~\forall~e$, are randomly spread between different pairs of nodes. 
One of the nodes is randomly picked to be the gateway node, $v_0$, at which the messages are recovered.
Simulations are repeated by generating $150$ different random realizations of network deployments to obtain smoothed results.

For each generated random network deployment, we perform QNC for different values of message sparsity factor; \textit{i.e.} $\frac{k}{n}=0.1,0.2,0.3$.
Specifically, to generate messages, $\underline{x}$, we first generate a $k$-sparse random vector, $\underline{s}$, whose components are uniformly distributed between $-\frac{1}{2}$ and $+\frac{1}{2}$.
This is followed by generation of an orthonormal random matrix, $\phi$, calculating $\phi \cdot \underline{s}$ and normalizing the results between $-q_{max}$ and $+q_{max}$ to obtain $\underline{x}$.
Moreover, network coding coefficients, $\alpha_{e,v}(t)$ and $\beta_{e,e'}(t)$, are generated according to the conditions of theorem~\ref{th:designRIP} and normalizing the results to satisfy the condition of Eq.~\ref{Eq:antiClipCond1} and prevent overflow.
At the decoder, the received measurements up to $t$, $\underline{z}_{tot}(t)$, are used to recover $\underline{\hat{x}}(t)$, according to (\ref{Eq:L1minDecoder}), which is calculated by using the open source implementation in \cite{cvx1}.

For each deployment, we also simulate a routing based packet forwarding and compare its performance with QNC.
To find the routes from each node to the gateway node, we calculate the shortest path from each node to the gateway node, using the Dijkstra algorithm \cite{dijkstra1959note}.
Obviously, in this case, the only associated error for received packets (messages) at the decoder is the quantization noise at the source nodes.

\begin{figure*}[t]
\centering
\subfigure[$1400$ edges]{
\resizebox{.69\textwidth}{!}{
\includegraphics{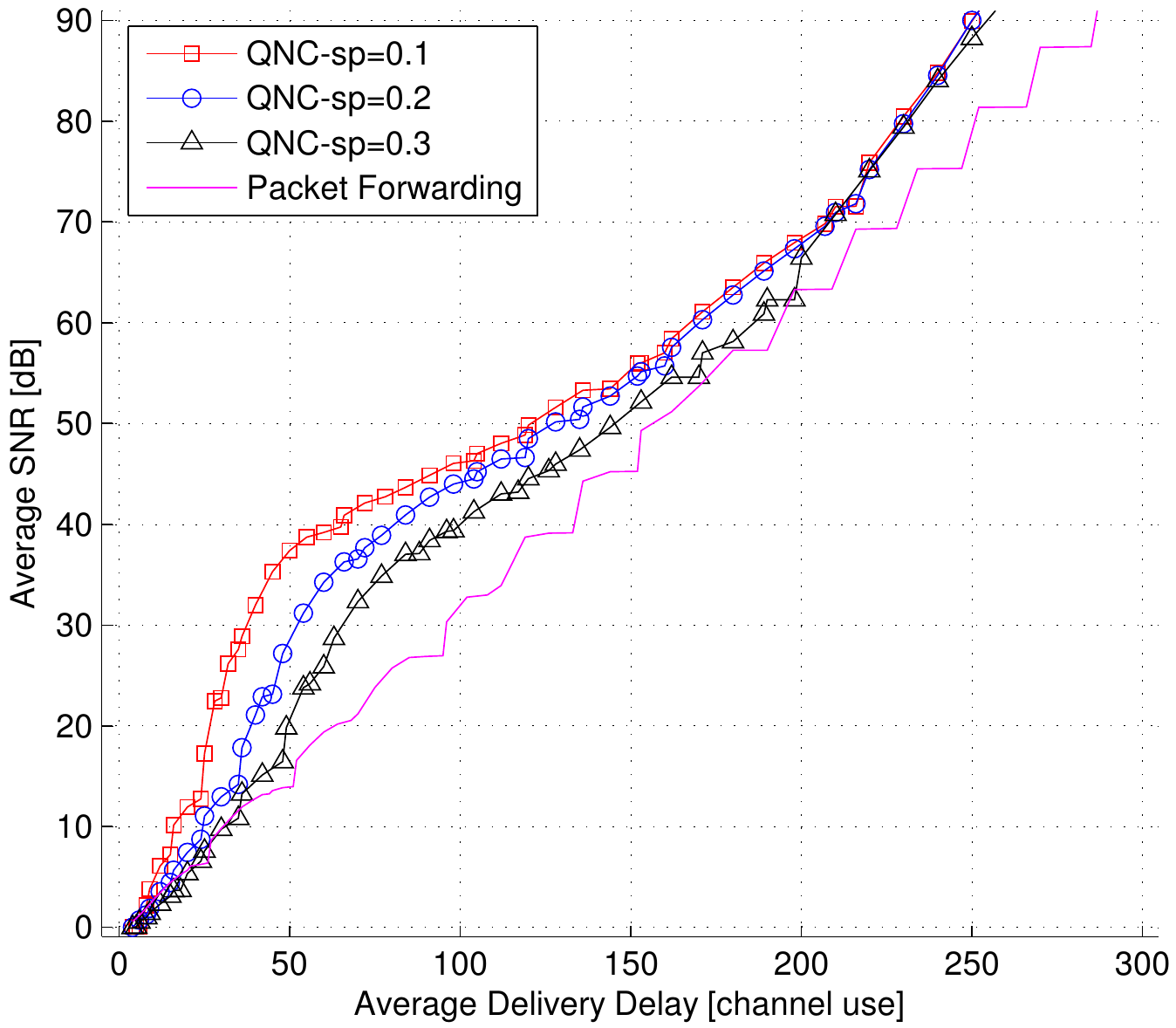}}
\label{fig:subfig1}
} \qquad
\subfigure[$1100$ edges]{
\resizebox{.69\textwidth}{!}{
\includegraphics{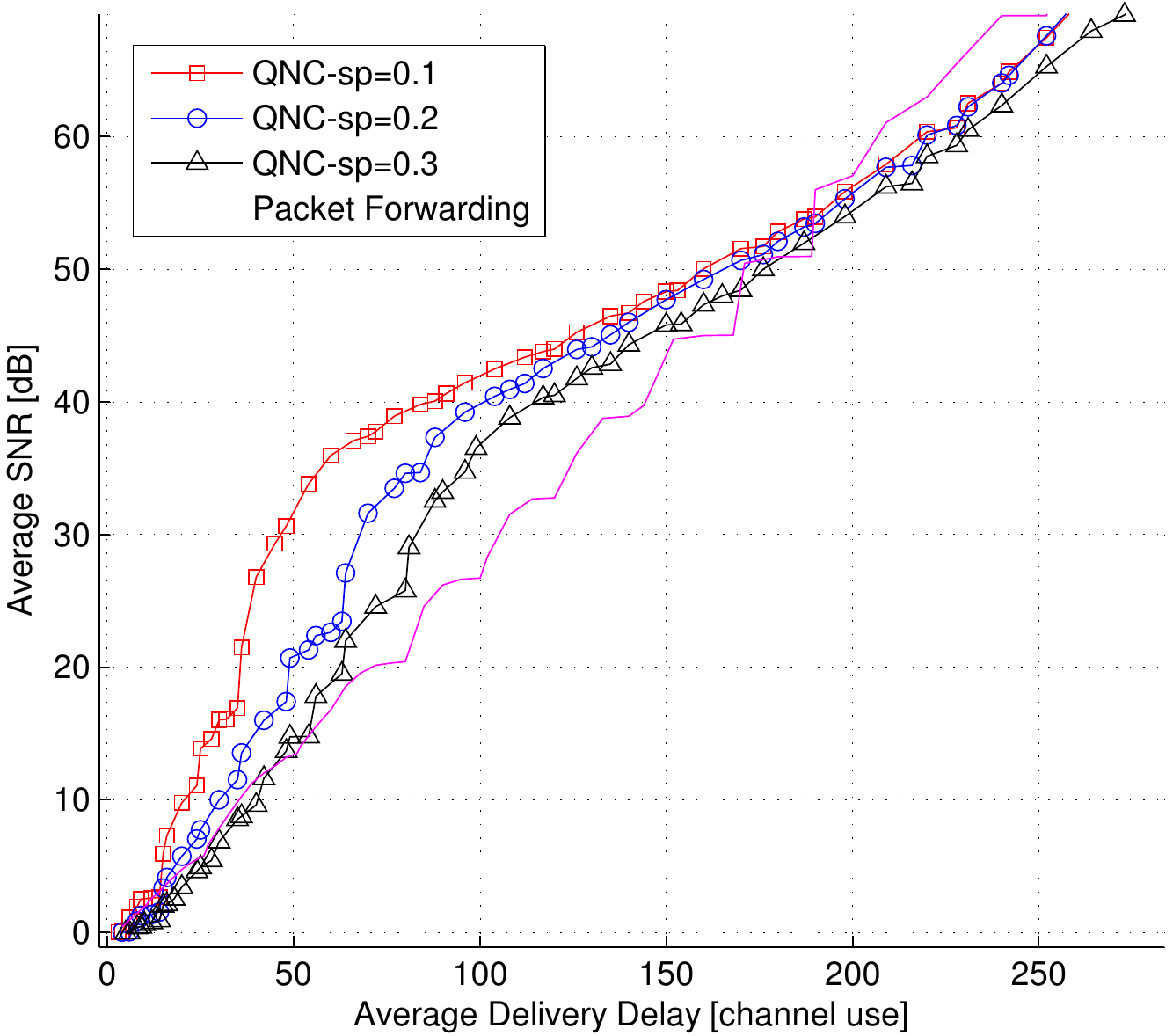}}
\label{fig:subfig2}
} \\
\caption{Average SNR versus average delivery delay of QNC and Packet Forwarding for \subref{fig:subfig1} $1400$ \subref{fig:subfig2} $1100$ edges and $\frac{k}{n}=0.1,0.2,0.3$\label{fig:subfigureExample}.
}
\end{figure*}

In each case of QNC and packet forwarding, we find the optimal block length, $L$, by repeating the simulations for different integer values of $L$ and then picking the smallest $\overline{\tau}_d(t)$ for each value of $\overline{SNR}(t)$.
The average SNR - average delivery delay curves, resulting from optimizing the block length for QNC and packet forwarding, are depicted in Fig.~\ref{fig:subfigureExample}.
As shown, in all of the cases, the proposed QNC with $\ell_1$-min decoding outperforms packet forwarding in small delivery delays, which corresponds to small number of received measurements at the decoder.
However, when the number of received measurement passes a certain threshold and increases, the SNR of $\ell_1$-min decoder can not beat that of packet forwarding. Even for $1100$ edges in Fig.~\ref{fig:subfig2}, QNC fails to achieve the same SNR as packet forwarding.
To explain this, we should note that for higher SNR values, we may need to collect more measurements at the decoder, which requires a larger $t$.
Since a larger $t$ is picked for decoding the messages, more quantization noises are contributed to the measurements, which increases the $\ell_2$-norm of effective measurement noise.

\section{Conclusions and Future Works}\label{sec:Conclusions}
Joint source and network coding of sparse messages was discussed in this paper.
As a good alternative for packet forwarding, linear network coding in the real field, is coupled with quantization, to gather sparse data in a decoder node.
Moreover, the required conditions for theoretical guarantee of $\ell_1$-min recovery was discussed and an appropriate design for network coding coefficients, in terms of RIP, was proposed.
Finally, in section~\ref{sec:SimResults}, by using simulations, we have shown the promising performance of $\ell_1$-min recovery for quantized network coded packets, in terms of SNR versus delivery delay.
As our future plan, we are interested to derive theoretical guarantees for satisfaction of RIP of $\Psi_{tot}(t)$, resulted from our designed network coding coefficients.
It is also planned to apply our QNC with $\ell_1$-min decoding to realistic data gathering scenarios in power substations.


\bibliographystyle{IEEEbib}
\bibliography{Ref}
\end{document}